\documentclass[twocolumn,floatfix,superscriptaddress,amsmath,amssymb,aps,prl]{revtex4-2}
\pdfoutput=1
\usepackage{graphicx}
\graphicspath{ {Figures/} }

\usepackage{dcolumn}
\usepackage{bm}
\usepackage{bbold}
\usepackage[export]{adjustbox}

\usepackage{multirow}

\usepackage{mathtools} 
\DeclarePairedDelimiter\bra{\langle}{\rvert}
\DeclarePairedDelimiter\ket{\lvert}{\rangle}
\DeclarePairedDelimiterX\braket[2]{\langle}{\rangle}{#1 \delimsize\vert #2}
\DeclareMathOperator{\tr}{tr}

\usepackage{array,colortbl,xcolor}

\usepackage[utf8]{inputenc}
\usepackage[T1]{fontenc}
\usepackage{mathrsfs,bbm}
\usepackage{dsfont}

\usepackage{amssymb}
\usepackage{amsmath}
\usepackage{amsfonts}
\usepackage{amsthm}
\usepackage{mathtools}
\usepackage{hyperref}
\hypersetup{pdfpagemode=UseNone}
\hypersetup{colorlinks=true}
\hypersetup{citecolor=blue}
\hypersetup{linkcolor=blue}
\hypersetup{urlcolor=blue}
\usepackage{cleveref}
\newtheorem{theorem}{Theorem}

\newtheorem{proposition}[theorem]{Proposition}

\newcounter{rem}
\setcounter{rem}{1}


\def\>{\rangle}
\def\<{\langle}

\def\tr{{\rm tr}}

\def\textbf#1{{\bf #1}}

\newcommand{\Nl}{\mathbb{N}}

\usepackage{dsfont}
\DeclareMathOperator{\id}{\mathds{1}}
\DeclareMathOperator{\range}{range}
\DeclareMathOperator{\linspan}{span}
\renewcommand{\paragraph}[1]{\addcontentsline{toc}{section}{#1}\emph{#1.}---}
\usepackage{xcolor}

\usepackage{soul}

\begin{document}

\title{Semi-device-independent certification of number of measurements}

\author{Isadora Veeren}
\affiliation{Centro Brasileiro de Pesquisas Físicas, Rua Dr. Xavier Sigaud, 150, Rio de Janeiro, RJ, Brasil}

\author{Martin Pl\'{a}vala}
\affiliation{Naturwissenschaftlich-Technische Fakult\"{a}t, Universit\"{a}t Siegen, Walter-Flex-Stra\ss e 3, 57068 Siegen, Germany}

\author{Leevi Lepp\"{a}j\"{a}rvi}
\affiliation{RCQI, Institute of Physics, Slovak Academy of Sciences,
D\'ubravsk\'a cesta 9, 84511 Bratislava, Slovakia}

\author{Roope Uola}
\affiliation{Department of Applied Physics, University of Geneva, 1211 Geneva, Switzerland}

\date{\today}

\begin{abstract} 
We develop a method for semi-device-independent certification of number of measurements. We achieve this by testing whether Bob's steering equivalent observables (SEO) can be simulated by $k$ measurements, which we do by testing whether they are $k$-compatible with separable joint observable. This test can be performed with the aid of hierarchy of semidefinite programs, and whenever it fails one can conclude that Alice must have access to at least $k+1$ incompatible measurements.
\end{abstract}

\maketitle

\paragraph{Introduction}
The theory of quantum measurements is at the heart of our understanding of quantum theory. Originally developed to describe and understand the measurement process on the foundational level, recently the advantages of this theory for quantum information processing have become clear, see \cite{heinosaari2016invitation,JMreviewOtfried} for reviews on the topic.

So far, the measurement theoretic approach to quantum correlations and communication has concentrated mainly on measurements acting on a single system. This is despite the fact that quantum networks have gone through a rapid development in recent years \cite{networkreview}, and the investigation of such platforms necessitates the use of multipartite measurements. On top of the fundamental results, such as the refutation of real-number-based models of quantum theory \cite{Renouimaginary}, the approach has resulted in an improved understanding of measurements acting on compound systems, for example, in the form of the elegant joint measurement \cite{GisinEJM}. However, the full theory of quantum measurements on such systems is still developing, and the exact role of known concepts, such as entangled measurements and measurements simulable by local operations supported by classical communication \cite{JakubISOentangled,LOCCreview}, in networks remains unclear.

In this manuscript, we contribute to the theory of measurements in compound systems by exploiting a link between two central concepts, resulting in a hierarchy for certifying the number of measurements in a semi-device independent manner. These concepts are simulability of measurements and compatibility of measurements on many copies. On the one hand, simulability asks whether the statistics of a given set of quantum measurements can be explained by using a pre-fixed set of measurements assisted by, for example, randomness and classical data processing. Recently, the concept has drawn a considerable amount of interest especially for the cases when the pre-fixed measurements are sharp \cite{Oszmaniecprojsim}, contributing to improving known bounds on the Grothendieck constant $K_G(3)$ \cite{Hirschconstant}, have fixed number of outcomes \cite{guerini2017,Shi2020}, showing that there are truly non-projective measurements in quantum theory \cite{Gomeznonprojective}, have a limited dimensionality \cite{Ioannousimulability}, allowing for a record high semi-device independent certification of entanglement dimensionality \cite{Designolle21HDSteer}, and the number of simulating measurements is fixed \cite{guerini2017}. Here we concentrate on the fourth case. On the other hand, compatibility on many copies asks whether one can recover the statistics of a set of measurements from a single measurement on a compound system \cite{carmeli2016}. This forms a natural generalization of compatibility of measurements, which is recovered when compound system is not used. So far, this special case has found deep connections with an advantage in quantum correlations in bipartite \cite{Wolf09,quintino2014,Uola14}, prepare-and-measure \cite{Tavakoli20,Saha23}, and temporal \cite{Carmeli19,Skrzypczyk19,Uola19quantifying,Uola19leggettgrag,Oszmaniecoperational,Uola20,Buscemi20,Uola22} scenarios, but the more recent and more general many-copy case has not yet found applications in such setting.

We demonstrate the use of the proposed hierarchy in a quantum steering scenario. Steering manifests itself in asymmetric scenarios where Alice and Bob cannot be inter-exchanged \cite{Midgley10,Olsen2013,Bowles2014,Evans2014,Skrzypczyk2014,Bowles2016,Sekatski2023}. Because of the way this scenario is designed, steering can be formulated in one-way-device-independent approaches. As a result, it can also be exploited for many practical applications, such as quantum key distribution \cite{branciard2012}, randomness certification \cite{law2014,Passaro2015,Skrzypczyk2018} or secret sharing \cite{xiang2017,Kogias17}, where such asymmetry can be an advantage.

Beyond its applications, steering has a considerable role in foundations of quantum mechanics. Though a completely distinct and independent phenomenon, it is closely related to Bell non-locality, entanglement and incompatibility of measurements \cite{Cavalcanti2016,steering-review}. In fact, steering can only be observed if the shared state is entangled and the measurements performed by Alice are incompatible \cite{quintino2014,Uola14}. Such requirement makes the detection of steering a strategy to certify that these measurements are incompatible in a one-way-device-independent approach \cite{Chen2016,Cavalcanti16quantitative}, a feat that is relevant on its own since incompatibility is central in many protocols.

Using our results on measurements, we go beyond witnessing incompatibility of measurements through the violation of a steering inequality. Indeed, we provide a strategy to actually certify the number of incompatible measurements to which Alice has access. In a bipartite steering scenario, we develop a series of tests that Bob can perform with the information he has on his side of the experiment and establish a lower bound to how many incompatible measurements his colleague must have performed on her part of a shared quantum state. Our method also displays the main advantage of admitting a semi-definite programming (SDP) formulation, and our tests can, as a consequence, be efficiently computed with well-established methods.

\paragraph{Compatibility and simulability}
Joint measurability of a set of positive-operator valued measures (POVMs) encapsulates the notion of whether a set of measurements can be performed simultaneously by a single device. Formally, a measurement assemblage, i.e., a set of POVMs $\{M_{a|x}\}_{a,x}$ obeying $M_{a|x} \geq 0 $ and $\sum_a M_{a|x} = \id$ for all $a,x$, is said to be compatible or jointly measurable if there exist a POVM $\{G_{\lambda}\}_{\lambda}$ and a probability distribution $p$ such that, for any state $\rho$,
\begin{equation}
    \tr[\rho M_{a|x}] =\sum_{\lambda} p(a|x,\lambda) \tr[\rho G_{\lambda}].
\end{equation}
When this condition is met, it means that the joint observable $\{G_\lambda\}_\lambda$ represents a simultaneous measurement of $\{M_{a|x}\}_{a,x}$ and can recover its statistics through a suitable post-processing. Equivalently, this condition can be put in terms of the marginalization of some joint measurement. In this case, there must be a POVM $G$ with effects $G_{a_1, \dots, a_n}$ such that $M_{a|x} = \sum_{a_i, i \neq x} G_{a_1, \dots, a_{x-1}, a, a_{x+1}, \dots ,a_n}$. Then it is said that $M_{a|x}$ can be recovered as the $x$\textsuperscript{th} marginal of $G$.

The notion of joint measurability can be extended to the concept of $k$-compatibility \cite{carmeli2016}, that is central for our work. The idea is that we are allowed to perform the joint measurement on $k$ copies of the state $\rho$. Thus a set of measurements $\{M_{a|x}\}_{a,x}$ is said to be $k$-compatible if there exists a joint observable $\{G_{\lambda}\}_{\lambda}$ such that for any state $\rho$ we have
\begin{equation}
    \tr[\rho M_{a|x}] = \sum_{\lambda} p(a|x,\lambda) \tr[\rho^{\otimes k} G_{\lambda}].
\end{equation}
In such case $G$ is called a $k$-copy joint observable of $\{M_{a|x}\}_{a,x}$. This is a relaxation of the usual concept of compatibility, which is recovered in the case where one has access to only one copy of $\rho$, meaning $k=1$. Also, notice that any set of $k$ measurements is $k$-compatible. An example of incompatible but $2$-compatible POVMs is given by the hollow triangle POVMs \cite{Heinosaari2008}, i.e. a set of 3 incompatible POVMs that are compatible pairwise. Such triplet of POVMs is clearly $2$-compatible, as one can construct the $2$-copy joint POVM by measuring one of the POVMs on one copy of $\rho$ and measuring the joint observable of the other two on the other copy of $\rho$.

Another central concept for our work is that of $k$-simulability \cite{guerini2017}. In contrast to having many copies of the state, in $k$-simulability one has $k$ measurements. Also, in contrast to joint measurability that uses only classical post-processing, in $k$-simulability one is allowed to use classical pre-processing as well. 
Formally, a set of measurements given by POVM elements $\{M_{a|x}\}_{a,x}$ is $k$-simulable if there exist probability distributions $p$ and $q$, as well as $k$ POVMs with elements $B_{b|y}$ such that
\begin{equation} \label{eq:ksim}
    M_{a|x} = \sum_{y=1}^{k} p(y|x) \sum_{b} q(a|x,b,y)B_{b|y}.
\end{equation}
When that is the case, $\{B_{b|y}\}_{b,y}$ are called $k$-simulators of $\{M_{a|x}\}_{a|x}$. Notice that, similarly to $k$-compatibility, any set of $k$ POVMs is $k$-simulable and we recover the notion of usual compatibility when $k=1$. As an example of $2$-simulable but incompatible measurements one can again take the hollow triangle POVMs and apply a construction analogous to the previous one.

\paragraph{Steering and steering equivalent observables}
Consider a scenario where two parties, Alice and Bob, share a quantum state $\rho_{AB}$, upon which they can perform measurements and classically exchange its results. On Alice's side, she has access to a set of measurements labeled by $x$ with outcomes $a$, described by the POVM effects $\{M_{a|x}\}_{a,x}$. After Alice measures, Bob is left with reduced states conditioned to Alice's measurement choice and measurement outcome. That is, Bob will have access to a state assemblage $\{\rho_{a|x}\}_{a,x}$ given as $\rho_{a|x} = \tr_{A}[(M_{a|x} \otimes \id)\rho_{AB}]$. 
Upon determining his state assemblage, Bob can check whether the assemblage can be explained by a local hidden state (LHS) model. An assemblage is said to allow LHS model if there are probabilities $p(a|x,\lambda)$ and sub-normalized states $\sigma_{\lambda}$ such that $\rho_{a|x} = \sum_{\lambda} p(a|x,\lambda)\sigma_{\lambda}$. In this case, Bob could just claim the states he observes come from local states $\sigma_{\lambda}$ in his laboratory whose probability distributions are simply updated by finding out Alice's measurement outcomes. When that is not the case,
then Bob is concludes that Alice is able to steer his states, i.e. the state assemblage can not be realized using a separable state.

Steerability of any state assemblage can be put in terms of a joint measurability problem. Let $\rho_B = \tr_A[\rho_{AB}]$ be Bob's reduced state and $\Pi_B$ be the projection onto $\range(\rho_B)$. The steering equivalent observables (SEO) of Bob's state assemblage $\{\rho_{a|x}\}_{a,x}$ are defined as $S_{a|x} = \tilde{\rho}_B^{-\frac{1}{2}} \tilde{\rho}_{a|x} \tilde{\rho}_B^{-\frac{1}{2}}$, where $\tilde{\rho}_{a|x} = \Pi_B \rho_{a|x} \Pi_B^{\dagger} $ and $\tilde{\rho}_B = \Pi_B \rho_B \Pi_B^{\dagger} $. It is known that $\{\rho_{a|x}\}_{a,x}$ has LHS model if and only if $\{S_{a|x}\}_{a|x}$ is jointly measurable \cite{uola2015one}.

\paragraph{Constructing the test}
We will now introduce a string of implications that will lead to a test that certifies that Alice must have access to more than $k$ incompatible measurements. We start by showing that if a measurement assemblage is $k$-simulable, then the SEO of the corresponding state assemblage is also $k$-simulable. Consider a measurement assemblage $\{M_{a|x}\}_{a,x}$. If the assemblage is $k$-simulable, then there exist probability distributions $p$ and $q$, as well as $k$ POVMs with effects $\{B_{b|y}\}_{b,y}$ that satisfy Eq.~\eqref{eq:ksim}. Bob's SEO are given by
\begin{equation}
    S_{a|x} = \tilde{\rho}_B^{-\frac{1}{2}} \Pi_B \tr_{A}[(M_{a|x} \otimes \id) \rho_{AB}] \Pi_B^{\dagger} \tilde{\rho}_B^{-\frac{1}{2}}. 
\end{equation}
Using Eq. \eqref{eq:ksim} we have that
\begin{equation}
\begin{split}
    S_{a|x} = &\sum_{y=1}^k \sum_{b} p(y|x) q(a|b,x,y) \\
    &\tilde{\rho}_B^{-\frac{1}{2}} \Pi_B  \tr_{A}[(B_{b|y} \otimes \id) \rho_{AB}] \Pi_B^{\dagger} \tilde{\rho}_B^{-\frac{1}{2}}.
\end{split}
\end{equation}
We can then simply identify $\tilde{B}_{b|y}$, the simulators of $S_{a|x}$, as  
\begin{align}\label{eq:decompose}
    \tilde{B}_{b|y} &= \tilde{\rho}_B^{-\frac{1}{2}} \Pi_B \tr_{A}[(B_{b|y} \otimes \id) \rho_{AB}] \Pi_B^{\dagger} \tilde{\rho}_B^{-\frac{1}{2}}.
\end{align}
$p,q$ are assumed to be probability distributions, so as long as $\{\tilde{B}_{b|y}\}_{b,y}$ is a measurement assemblage of $k$ POVMs then, by definition, $S_{a|x}$ is $k$-simulable. All that is left is to show that $\tilde{B}_{b|y} \geq 0 $ and $\sum_a \tilde{B}_{b|y} = \id$ for all $b,y$.

It is easy to see that the first condition holds since $B_{b|y} \geq 0$, we thus only need to check that $\sum_b \tilde{B}_{b|y} = \id$. We have
\begin{equation}
    \sum_b \tilde{B}_{b|y}
    = \tilde{\rho}_B^{-\frac{1}{2}} \Pi_B \rho_B \Pi_B^{\dagger} \tilde{\rho}_B^{-\frac{1}{2}} 
    = \tilde{\rho}_B^{-\frac{1}{2}}  \tilde{\rho}_B  \tilde{\rho}_B^{-\frac{1}{2}}
    = \id.
\end{equation}
With that we conclude that if a measurement assemblage is $k$-simulable, then the steering equivalent observables of the state assemblage it generates will also be $k$-simulable. Notice however that the converse is not true: if the SEO of a state assemblage is $k$-simulable it does not mean that the measurement assemblage that generated it is also $k$-simulable. As a counter example, whenever Alice and Bob share a separable state, Bob's steering equivalent observables will be $k$-simulable for any $k \geq 1$, regardless of whether Alice's measurement assemblage is $k$-simulable or not.

We now recall the result of \cite{Filippov2018} where it was shown (in the framework of general probabilistic theories) that $k$-simulability implies $k$-compatibility of the measurement assemblage such that the $k$-copy joint measurement can be chosen to be of the product form. For completeness, we formulate the proof in the case of quantum theory. We prove it by directly constructing a joint measurement for the assemblage. Consider the set $\{ M_{a|x} \}_{a,x}$ that is $k$-simulable, meaning there must exist probability distributions $p,q$ and measurement assemblage $\{B_{b|y}\}_{b,y}$ of $k$ POVMs that satisfy \eqref{eq:ksim}. Define
\begin{equation}
N_{b|y} = \underbrace{\id \otimes \dots \otimes \overbrace{B_{b|y}}^{\text{$y$\textsuperscript{th} term}} \otimes \dots \otimes \id}_{\text{$k$ terms}},
\end{equation}
it follows that
\begin{align}
    \tr[\rho M_{a|x}] &= \sum_{y=1}^k \sum_{b} p(y|x) q(a|b,x,y) \tr[\rho^{\otimes k} N_{b|y}].
\end{align}
Now, notice that the POVM elements $N_{b|y}$ can be obtained as $y$\textsuperscript{th} marginal of $\tilde{N}_{\vec{b}} = B_{b_1|1} \otimes \dots \otimes B_{b_y|y} \otimes \dots \otimes B_{a_k|k}$, namely
\begin{equation}
    N_{b|y} = \sum_{b_{i}, i \neq y} B_{b_1|1} \otimes \dots \otimes B_{b|y} \otimes \dots \otimes B_{a_k|k}.
\end{equation}
So one can conclude that
\begin{equation}
    \tr[\rho M_{a|x}] = \sum_{y=1}^k \sum_{b_1 \ldots b_k} p(y|x) q(a|b_y,x,y) \tr[\rho^{\otimes k} \tilde{N}_{\vec{b}}].
\end{equation}
We thus conclude that the set $\{ M_{a|x} \}_{a,x}$ must be $k$-compatible with $k$-copy joint observable in a product form.

The conditions that we have derived here are necessary and, hence, they enable us to develop a hierarchy of conditions to check $k$-simulability in a semi-device-independent scenario. It is an open question whether the conditions are also sufficient, or whether one needs to add additional conditions. We present sufficient conditions for $2$-compatibility in the Appendix and we relegate the question of necessary and sufficient conditions to future research.

\paragraph{Constructing the SDP hierarchy}
Given Bob's state assemblage $\{\rho_{a|x}\}_{a,x}$, one can construct its SEO $\{S_{a|x}\}_{a,x}$. We know that if Alice's measurement assemblage $\{M_{a|x}\}_{a,x}$ is $k$-simulable then so is Bob's SEO, which means that if $\{S_{a|x}\}_{a,x}$ is not $k$-simulable then neither is $\{M_{a|x}\}_{a,x}$. This way, by checking the $k$-simulability of $\{S_{a|x}\}_{a,x}$ one can extract information about Alice's measurements: if Bob's SEO are $k$-simulable the test is inconclusive, but if $\{S_{a|x}\}_{a,x}$ is not $k$-simulable then we can conclude that Alice's measurement assemblage must consist of at least $k+1$ incompatible measurements.

The task of checking the $k$-simulability of a set of POVMs cannot in general be easily computed, but one can perform the following tests: One can test the $k$-compatibility of $\{S_{a|x}\}_{a,x}$ using an SDP, or one can test the $k$-compatibility of $\{S_{a|x}\}_{a,x}$ and enforce that the $k$-copy joint observable has positive partial transpose using an SDP, or one can test whether $\{S_{a|x}\}_{a,x}$ is $k$-compatible with $k$-copy joint observable in a separable form through an SDP hierarchy. These SDPs give a series of tests and if any of these SDPs is not feasible, then we know that Alice's assemblage must consist of at least $k+1$ incompatible measurements. If all of these SDPs are feasible, then the result of the tests is inconclusive, since the existence of a separable $k$-copy joint observable does not imply that it can be selected to be in a product form.

Even though SDPs for computing the joint measurability of a set $\{M_{a|x}\}$ can be directly written from the definition of compatibility, it takes some further investigation to be able to formulate an analogous construction to $k$-compatibility. We must recall the main result in \cite{carmeli2016}, stating that the $k$-compatibility of a set $\{M_{a|x}\}_{a,x}$ is equivalent to compatibility of the set $\{\tilde{M}^k_{a|x}\}_{a,x}$, defined as
\begin{equation} \label{eq:k-comp-to-comp}
    \tilde{M}^k_{a|x} = \dfrac{1}{k} \sum_{\ell=0}^{k-1} \id^{\ell} \otimes M_{a|x} \otimes \id^{\otimes k-\ell-1}.
\end{equation}
With this result, one can construct an SDP formulation to check the $k$-compatibility of Bob's SEO. All that is left is to also require that the $k$-copy joint observables are also separable. From Eq. \eqref{eq:k-comp-to-comp} it is clear that if the $k$-copy joint observable of $\{M_{a|x}\}$ is of the product form then the joint observable of $\{\tilde{M}_{a|x}\}$ is separable.

A bipartite operator $X \in \mathcal{L}(\mathcal{H}_A \otimes \mathcal{H}_B)$, where $\mathcal{L}(\mathcal{H}_A \otimes \mathcal{H}_B)$ denotes the set of linear operators on the tensor product of Hilbert spaces $\mathcal{H}_A$ and $\mathcal{H}_B$, is said to be separable if it can be written as $X = \sum_i Y_i \otimes Z_i$, where $Y_i, Z_i$ are positive operators on $\mathcal{H}_A$ and $\mathcal{H}_B$ respectively. Determining whether an operator can be put in this form is an NP-hard problem but there are many separability criteria that can be used tackle this problem, most famously the Positive Partial Transpose (PPT) criterion, stating that if $X$ is separable then its partial transpose must be positive, i.e. $X^{T_A} \geq 0$. Here $T_A$ denotes the partial transpose over the system $A$, defined as $(X_A \otimes X_B)^{T_A} = X_A^T \otimes X_B$.

One can consider the DPS hierarchy of criteria established in \cite{dps}, where symmetric extensions of $X \in \mathcal{L}(\mathcal{H}_A \otimes \mathcal{H}_{B_1})$ are constructed, namely operators $\tilde{X}_N \in \mathcal{L}(\mathcal{H}_A \otimes \mathcal{H}_{B_1} \otimes \ldots \otimes \mathcal{H}_{B_N})$ such that $\tr_{B_2, \ldots, B_N}[\tilde{X}_N] = X$ and $\tilde{X}_N = P \tilde{X}_N P$, where $P$ is the operator that performs any permutation of $\mathcal{H}_{B_1}, \ldots, \mathcal{H}_{B_N}$. If a certain operator $X$ has symmetric extension for arbitrary $N \in \Nl$, then $X$ is separable. This construction can be easily generalized to proving full separability of multipartite operators. In this case one needs to search for symmetric extensions over all parties but one, see \cite{aubrun2022monogamy}.

With this construction, one can build an SDP hierarchy to test whether Bob's SEO is $k$-compatible with separable $k$-copy joint observable. Since the $N$\textsuperscript{th} level of the hierarchy corresponds to symmetrically extending its POVM elements to $N$ copies, we can also apply the PPT criterion to them to improve convergence of the hierarchy. Whenever the test fails one can be sure that $\{S_{a|x}\}_{a,x}$ are not $k$-compatible with $k$-copy joint observable in a product form, and hence, after evaluating the string of implications we constructed, Alice must have access to at least $k+1$ incompatible measurements. For $N=2$ we obtain the following SDP:
\begin{align*}
 &\text{given} && \tilde{S}^k \\
 &\text{find} && \tilde{G}  \\
 &\text{s.t.} && \\
 &\text{$k$-compatibility} &&
 \begin{cases}
    & \tilde{S}^k_{a|x} = \sum_{\lambda} p(a|x,\lambda)G_{\lambda}, \forall a,x \\
    & \sum_{\lambda}G_{\lambda} = \id \\
    & G_{\lambda} \geq 0, \forall \lambda
 \end{cases} \\
 &\text{PPT} &&
 \begin{cases}
    & \tilde{G}_{\lambda}^{T_X} \geq 0, X = \{ A, B_1, B_2 \} \\
 \end{cases} \\
 &\text{first level of DPS} &&
 \begin{cases}
    & \tr_{B_2}[\tilde{G}_{\lambda}] = \tr_{B_1}[\tilde{G}_{\lambda}] = G_{\lambda} \\
    & \tilde{G}_{\lambda} \geq 0
 \end{cases}
\end{align*}
Whenever this task has no solution one can conclude that Alice has access to at least $k+1$ incompatible measurements. If there is $\tilde{G}$ that obeys these constraints the test is inconclusive and one can consider higher levels of the separability hierarchy.

\paragraph{Examples}
To test the efficiency of this test in certifying a lower bound for the number of Alice's measurements, one can investigate the typical example of measurements in mutually unbiased bases (MUB). Consider the noisy version of Alice's measurement assemblage $\{M_{a|x}\}$, consisting of $n_m$ measurements with $n_a$ outcomes each, parameterized by the visibility $t \in [0,1]$ as $M^t_{a|x} = t M_{a|x} + (1-t) \frac{\id}{n_a}$. For $t=0$ we have simply a trivial set of measurements, which is clearly $k$-compatible with separable $k$-copy joint measurement, and for $t=1$ we recover the original MUB measurements. One can evaluate what is the lowest value of the critical visibility $t_c$ for which $\{M^t_{a|x}\}_{a,x}$ passes the test, meaning for $t \geq t_c$ we can certify that Alice's assemblage consists of at least $k+1$ incompatible measurements.

For qubits we do not need to consider the DPS hierarchy, since in this case an operator is separable if and only if it is PPT. We thus obtain that for all three MUBs, the critical visibility for 2-compatibility is $\frac{\sqrt{3}}{2}$, while the critical visibility for 2-compatibility with separable $k$-copy joint observable is $\sqrt{\frac{2}{3}}$.

For qutrits our findings are summarized in Table~\ref{tab-qutrit}. The full criteria to be evaluated is that the set $\{{S}^k_{a|x}\}_{a,x}$ must be $k$-compatible with separable $k$-copy joint observable. We provide three upper estimates: $k$-compatibility, $k$-compatibility with PPT, that is, the case when we enforce that the $k$-copy joint observable is PPT, and $k$-compatiblity with PPT and first level DPS, that is, the case when we enforce that the $k$-copy joint observable is PPT and satisfies the first level of the DPS hierarchy.

\begin{table}
\begin{ruledtabular}
\begin{tabular}{ccccc}
$n_m$  & $k$ & $k$-compatibility  & PPT & first level DPS  \\
3 & 2 & 0.8553 & 0.7975 & 0.7393 \\
4 & 2 & 0.7681 & 0.6959 & 0.6933 \\
4 & 3 & 0.9310 & 0.8959 & 0.8592 \\
\end{tabular}
\end{ruledtabular}
\caption{Critical values $t_c$ for the visibility below which the set $\{{S}^k_{a|x}\}$ is $k$-compatible, $k$-compatibile with PPT, i.e., we enforce that the $k$-copy joint observable is PPT, and $k$-compatible with PPT and first level DPS, i.e., we enforce that the $k$-copy joint observable is PPT and satisfies the first level of the DPS hierarchy. $n_m$ denotes the number of measurements used to generate $\{{S}^k_{a|x}\}$, the state was taken to be the maximally entangled state.\label{tab-qutrit}}
\end{table}

\paragraph{Conclusion}
We have developed a hierarchy of conditions that enable semi-device-independent certification of number of measurements. We implemented the first levels of the hierarchy using freely accessible software. As a special case of interest, we investigated mutually unbiased bases measured on a maximally entangled state under a noisy environment, a setting that is well in the reach of today's experimental techniques. To be more specific, our results show that if Bob wants to verify that Alice has access to more than 2 measurements, this can be done in presence of significant background noise.

Our results can also be used to check $k$-simulability of POVMs. Our conditions are necessary, which is in contrast to the fact that previously there were no known conditions. There are also several possible directions for future work. The most significant is to find conditions for $k$-simulability that are both necessary and sufficient, and can be efficiently checked numerically. There are several scenarios in quantum information that can benefit from the use of notions researched in this work, we will demonstrate this by presenting two examples: One of the main obstacles in applications of device-independent quantum key distribution protocols \cite{nadlinger2022experimental,zhang2022device} is the low robustness of the protocols to various forms of noise present in the experiments. It is known that using higher dimensional Bell inequalities and Bell inequalities with more than two inputs and outcomes \cite{zapatero2023advances,miklin2022exponentially,xu2023graph,brown2021computing,gonzales2021device} can improve the noise thresholds necessary for secure quantum key distribution. The methods developed in this paper can be used by the parties in the quantum key distribution protocol to certify to each other that they have the necessary number of incompatible measurements, which are necessary to implement these protocols. Also the results hint at the possibility of semi-device-independent cryptography based on more than two measurements, which may again improve the feasibility of experimental applications. Another potential use is in attacking unknown quantum device; in every such attack there is a discovery phase where the attacker's goal is to learn as much as possible about the unknown device. The presented results show that if the attacker is able to send entangled states to the unknown device, then they can bound the number of incompatible measurements the unknown device can perform. 

\begin{acknowledgments}
\paragraph{Acknowledgments}
IV acknowledges the financial support of National Council for Scientific and Technological Development, CNPq Brazil, and thanks Carlos de Gois and Ties-A. Ohst for discussions.

MP acknowledges support from the Deutsche Forschungsgemeinschaft (DFG, German Research Foundation, project numbers 447948357 and 440958198), the Sino-German Center for Research Promotion (Project M-0294), the ERC (Consolidator Grant 683107/TempoQ), the German Ministry of Education and Research (Project QuKuK, BMBF Grant No. 16KIS1618K), and the Alexander von Humboldt Foundation.

LL acknowledges support from the European Union’s Horizon 2020 Research and Innovation Programme under the Programme SASPRO 2 COFUND Marie Sklodowska-Curie grant agreement No. 945478 as well as from projects APVV-22-0570 (DeQHOST) and VEGA 2/0183/21 (DESCOM).

RU is thankful for the financial support from the Swiss National Science Foundation (Ambizione PZ00P2-202179).
\end{acknowledgments}

\bibliography{refs}

\onecolumngrid
\appendix
\section{Sufficient conditions for $2$-simulability based on $2$-compatibility}

We make the following simple observation: Consider a set of $k$-compatible measurements $\{M_{a|x}\}_{a,x}$ with a $k$-copy joint observable $G_\lambda$ so that  
\begin{equation}\label{eq:k-comp}
    \tr[\rho M_{a|x}] = \sum_{\lambda} p(a|x,\lambda) \tr[\rho^{\otimes k} G_{\lambda}]
\end{equation}
for all states $\rho$ for some conditional probability distributions $p$. Suppose now that the POVM $\{G_{\lambda}\}_{\lambda}$ is a mixture of $k$ POVMs $\{\id^{\otimes y-1} \otimes B_{\lambda|y} \otimes \id^{\otimes k-y}\}_{\lambda,y}$, where $y \in \{1, \ldots,k\}$ and $\{B_{\lambda|y}\}_{b,y}$ are some $k$ POVMs. Thus, there exists a probability distribution $q$ such that
\begin{equation}
G_\lambda = \sum_{y=1}^k q(y) \left( \id^{\otimes y-1} \otimes B_{\lambda|y} \otimes \id^{\otimes k-y} \right).
\end{equation} 
Now we see that the original measurements $\{M_{a|x}\}_{a,x}$ are in fact $k$-simulable:
\begin{equation}
 \tr[\rho M_{a|x}] = \sum_{\lambda} p(a|x,\lambda) \tr[\rho^{\otimes k} G_{\lambda}] = \tr\left[\rho \left(\sum_{y=1}^k q(y) \sum_\lambda p(a|x,\lambda)  B_{\lambda|y}\right) \right].
\end{equation}

Consider now a set of $2$-compatible measurements $\{M_{a|x}\}_{a,x}$ with a joint observable $G_\lambda$. Let us take any two different states $\rho_1$ and $\rho_2$ and consider their mixture $\rho = \mu \rho_1 +(1-\mu) \rho_2$ by some weight $\mu \in (0,1)$. By the linearity of the trace we naturally have that $\tr[\rho M_{a|x}] = \mu \tr[\rho_1 M_{a|x}] + (1- \mu) \tr[\rho_2 M_{a|x}]$ and by using Eq. \eqref{eq:k-comp} for $k=2$ one can easily rephrase the condition as
\begin{equation}\label{eq:2-comp-vanish}
	\sum_{\lambda} p(a|x,\lambda) \tr\left[ \left( \rho_1 - \rho_2 \right)^{\otimes 2}G_\lambda \right] = 0 .
\end{equation}
Since this must hold for any two states $\rho_1$ and $\rho_2$, we must have that for all $a,x, \lambda$ such that $p(a|x,\lambda)\neq0$ we have that $G_\lambda = \tilde{B}_{\lambda|1} \otimes \id_B + \id_A \otimes \tilde{B}_{\lambda|2} + W_{\lambda}$ for some operators $\tilde{B}_{\lambda|1},\tilde{B}_{\lambda|2}, W_{\lambda}$ such that in particular $\tr\left[ \left( \rho_1 - \rho_2 \right)^{\otimes 2} W_{\lambda} \right] = 0$. In the case when $W_\lambda = 0$ and the operators $\tilde{B}_{\lambda|1}$ and $\tilde{B}_{\lambda|2}$ are selfadjoint we can show that the observation described above applies.

\begin{proposition}
Let $\{M_{a|x}\}_{a,x}$ be a set of $2$-compatible POVMs with a $2$-copy joint POMV $G$. If $G_\lambda \in \linspan(\{ B_1 \otimes \id_B + \id_A \otimes B_2 : B_1 \in \mathcal{L}(\mathcal{H}_A), B_2 \in \mathcal{L}(\mathcal{H}_B), B_1, B_2 \text{ are selfadjoint}\})$ for all $\lambda$, then $\{M_{a|x}\}_{a,x}$ are $2$-simulable.
\end{proposition}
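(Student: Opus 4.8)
The plan is to reduce the linear-span hypothesis, together with the defining properties of the POVM $\{G_\lambda\}_\lambda$, to exactly the product-mixture form identified in the observation at the start of this appendix. That observation already shows that once $G_\lambda$ is written as a convex combination $q(1)\,D_{\lambda|1}\otimes\id_B + q(2)\,\id_A\otimes D_{\lambda|2}$ of two genuine product POVMs, the measurements $\{M_{a|x}\}_{a,x}$ are $2$-simulable. So the entire task is to produce such a decomposition with $\{D_{\lambda|1}\}_\lambda$ and $\{D_{\lambda|2}\}_\lambda$ valid POVMs and $q$ a probability distribution. First I would fix the representation: the hypothesis gives $G_\lambda = C_1\otimes\id_B+\id_A\otimes C_2$ for some operators $C_1,C_2$, and since such an additive decomposition is unique up to the shift $(C_1,C_2)\mapsto(C_1+c\,\id_A,\,C_2-c\,\id_B)$, comparing $G_\lambda$ with $G_\lambda^\dagger$ shows the skew-Hermitian parts of $C_1,C_2$ are canceling imaginary multiples of the identity. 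Hence $G_\lambda=B_{\lambda|1}\otimes\id_B+\id_A\otimes B_{\lambda|2}$ with $B_{\lambda|1},B_{\lambda|2}$ selfadjoint, placing us in the $W_\lambda=0$ selfadjoint case discussed just before the statement.

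The core of the argument is turning these merely selfadjoint summands into positive ones. Because the spectrum of $B_{\lambda|1}\otimes\id_B+\id_A\otimes B_{\lambda|2}$ is exactly the set of pairwise sums of eigenvalues of $B_{\lambda|1}$ and $B_{\lambda|2}$, the condition $G_\lambda\ge 0$ is equivalent to $\lambda_{\min}(B_{\lambda|1})+\lambda_{\min}(B_{\lambda|2})\ge 0$. I would then exploit the identity-shift freedom: for each $\lambda$ pick a real $c_\lambda$ with $-\lambda_{\min}(B_{\lambda|1})\le c_\lambda\le\lambda_{\min}(B_{\lambda|2})$, a nonempty interval precisely by the previous inequality, and set $\hat{B}_{\lambda|1}=B_{\lambda|1}+c_\lambda\,\id_A\ge0$ and $\hat{B}_{\lambda|2}=B_{\lambda|2}-c_\lambda\,\id_B\ge0$. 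The additive identity contributions cancel, so $G_\lambda=\hat{B}_{\lambda|1}\otimes\id_B+\id_A\otimes\hat{B}_{\lambda|2}$ still holds, now with both summands positive. I expect this to be the main obstacle: everything hinges on the fact that $G_\lambda\ge0$ constrains only the sum of the two minimal eigenvalues, which is just enough slack to absorb the unavoidable shift ambiguity into positive operators.

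Finally I would impose normalization to assemble the POVMs. From $\sum_\lambda G_\lambda=\id_{AB}$ and the essential uniqueness of the additive decomposition (using partial traces to see that $\id_{AB}$ decomposes only as scalar multiples of the identity), one gets $\sum_\lambda B_{\lambda|1}=\alpha\,\id_A$ and $\sum_\lambda B_{\lambda|2}=\beta\,\id_B$ with $\alpha+\beta=1$. Consequently $\sum_\lambda \hat{B}_{\lambda|1}=q(1)\,\id_A$ and $\sum_\lambda \hat{B}_{\lambda|2}=q(2)\,\id_B$ with $q(1)=\alpha+\sum_\lambda c_\lambda$ and $q(2)=\beta-\sum_\lambda c_\lambda$, so $q(1)+q(2)=1$; and since each $q(y)\,\id$ is a sum of positive operators, $q(1),q(2)\ge0$, making $q$ a probability distribution. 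Dividing, $D_{\lambda|y}=\hat{B}_{\lambda|y}/q(y)$ are genuine POVMs (a vanishing weight is handled by dropping that branch, since then every $\hat{B}_{\lambda|y}=0$). Substituting $G_\lambda=q(1)\,D_{\lambda|1}\otimes\id_B+q(2)\,\id_A\otimes D_{\lambda|2}$ into the observation yields $M_{a|x}=\sum_{y=1}^2 q(y)\sum_\lambda p(a|x,\lambda)\,D_{\lambda|y}$, which matches the definition of $2$-simulability in \eqref{eq:ksim} with $k=2$, completing the proof.
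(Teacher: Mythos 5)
Your proof is correct and follows essentially the same route as the paper's: write $G_\lambda$ as a sum of selfadjoint summands, use the fact that the spectrum of $B_{\lambda|1}\otimes\id_B+\id_A\otimes B_{\lambda|2}$ is the sumset of the two spectra to redistribute a multiple of the identity and make both summands positive semidefinite, then extract the weights $q(y)$ from the partial traces of the normalization condition and invoke the observation at the start of the appendix. The only cosmetic difference is that the paper first argues (WLOG) that one summand is already positive and shifts by its minimal eigenvalue, whereas you choose the shift $c_\lambda$ directly from the nonempty interval $[-\lambda_{\min}(B_{\lambda|1}),\lambda_{\min}(B_{\lambda|2})]$; the two are equivalent.
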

\begin{proof}
    Since $G$ is a 2-copy joint POVM for $\{M_{a|x}\}_{a,x}$, there exists a conditional probability ditribution $p$ such that $\tr[\rho M_{a|x}] = \sum_{\lambda} p(a|x,\lambda) \tr[\rho^{\otimes 2} G_{\lambda}]$ for all $a,x$. If now $G \in \linspan(\{ B_1 \otimes \id_B + \id_A \otimes B_2 : B_1 \in \mathcal{L}(\mathcal{H}_A), B_2 \in \mathcal{L}(\mathcal{H}_B), B_1, B_2 \text{ are selfadjoint}\})$, then there exists some  selfadjoint $\tilde{A}_\lambda \in \mathcal{L}(\mathcal{H}_A),  \tilde{B}_\lambda \in \mathcal{L}(\mathcal{H}_B)$ such that $G_\lambda = \tilde{A}_\lambda \otimes \id_B +  \id_A \otimes \tilde{B}_\lambda$. We see that in fact we can choose another decomposition $G_\lambda = A_\lambda \otimes \id_B +  \id_A \otimes B_\lambda$ such that $A_\lambda$ and $B_\lambda$ are positive semidefinite: Namely, let us consider the spectral decompositions $\tilde{A}_\lambda = \sum_i \nu_\lambda^{(i)} \ket{\phi_\lambda^{(i)}}\bra{\phi_\lambda^{(i)}}$ and $\tilde{B}_\lambda = \sum_j \mu_\lambda^{(j)} \ket{\psi_\lambda^{(j)}}\bra{\psi_\lambda^{(j)}}$ for some real numbers $\nu_\lambda^{(i)}, \mu_\lambda^{(j)}$ and some orthonormal bases $\{\phi_\lambda^{(i)}\}_i$ and $\{\psi_\lambda^{(j)}\}_j$ of $\mathcal{H}_A$ and $\mathcal{H}_B$, respectively. Then we clearly have that 
    \begin{equation}
        G_\lambda = \tilde{A}_\lambda \otimes \id_B +  \id_A \otimes \tilde{B}_\lambda = \sum_{i,j} (\nu_\lambda^{(i)}+ \mu_\lambda^{(j)}) \ket{\phi_\lambda^{(i)}}\ket{\psi_\lambda^{(j)}}\bra{\psi_\lambda^{(j)}}\bra{\phi_\lambda^{(i)}}.
    \end{equation}
    The above equation defines a spectral decomposition for $G_\lambda$ so that from the positive semidefiniteness of $G_\lambda$ it follows that $\nu_\lambda^{(i)}+ \mu_\lambda^{(j)} \geq 0$ for all $i,j,\lambda$. Thus, in particular we must have that either $\tilde{A}_\lambda$ or $\tilde{B}_\lambda$ or both of them are positive semidefinite for all $\lambda$. Without loss of generality, we assume that $\tilde{A}_\lambda \geq 0$ so that $\nu^{(i)}_\lambda \geq 0$ for all $i,\lambda$. By denoting $\nu^{min}_\lambda := \min_i \nu^{(i)}_\lambda$ and defining $A_\lambda := \tilde{A}_\lambda - \nu^{min}_\lambda \id_A$ we see that also $A_\lambda \geq 0$. Now we see that 
    \begin{equation*}
        G_\lambda = \tilde{A}_\lambda \otimes \id_B +  \id_A \otimes \tilde{B}_\lambda = A_\lambda \otimes \id_B +  \nu^{min}_\lambda \id_A \otimes \id_B + \id_A \otimes \tilde{B}_\lambda = A_\lambda \otimes \id_B +  \id_A \otimes \left(\nu^{min}_\lambda \id_B + \tilde{B}_\lambda \right)
    \end{equation*}
    so that by denoting $B_\lambda := \nu^{min}_\lambda \id_B + \tilde{B}_\lambda$ we must have that $B_\lambda$ is positive semidefinite because $\nu_\lambda^{min}+ \mu_\lambda^{(j)} \geq 0$ for all $j$.

    By taking the partial traces separately with respect to $\mathcal{H}_A$ and $\mathcal{H}_B$ from the normalization condition $\sum_\lambda G_\lambda = \id_A \otimes \id_B$ we see now that
    \begin{align}
        \sum_\lambda A_\lambda &=  \left(1-\frac{\sum_\lambda \tr[B_\lambda]}{d_B} \right) \id_A \\
        \sum_\lambda B_\lambda &=  \left(1-\frac{\sum_\lambda \tr[A_\lambda]}{d_A} \right) \id_B \, .
    \end{align}
    Since $A_\lambda$ and $B_\lambda$ are positive semidefinite, we must have that $q :=  \left(1-\frac{\sum_\lambda \tr[B_\lambda]}{d_B} \right) \geq 0$ and $q' := \left(1-\frac{\sum_\lambda \tr[A_\lambda]}{d_A} \right) \geq 0$, and from the normalization of $G$ it also follows that $q' = 1- q$. By denoting $C_\lambda := \frac{1}{q} A_\lambda$ and $D_\lambda := \frac{1}{1-q} B_\lambda$ whenever $q,1-q \neq 0$ and $C_\lambda := 0 =: D_\lambda$ otherwise, we  have that $\{C_\lambda\}_{\lambda}$ and $\{D_\lambda\}_{\lambda}$ are in fact two POVMs such that
    \begin{equation}
        G_\lambda = q C_\lambda \otimes \id_B + (1-q) \id_A \otimes D_\lambda
    \end{equation}
    for all $\lambda$. Now we see that we are in the case described by the observation in the beginning of this section so that $\{M_{a|x}\}_{a,x}$ are $2$-simulable.

\end{proof}

We note that both the $2$-compatibility and the previously described condition for the $2$-copy joint POVM can be checked by using SDPs.

\end{document}